\newtheorem{theorem}{Theorem}
\newtheorem{lemma}{Lemma}
\newtheorem{corollary}{Corollary}
\theoremstyle{definition}
\begin{document}
	\begin{center}
		\Large
	\textbf{Long-time Markovianity of multi-level systems in the rotating wave approximation}\footnote{This work is supported by the Russian Science Foundation under grant 17-71-20154.}	
	
		\large 
		\textbf{A.E. Teretenkov}\footnote{Department of Mathematical Methods for Quantum Technologies, Steklov Mathematical Institute of Russian Academy of Sciences,
			ul. Gubkina 8, Moscow 119991, Russia\\ E-mail:\href{mailto:taemsu@mail.ru}{taemsu@mail.ru}}
		\end{center}
		
			\footnotesize
			For the model of a multi-level system in the rotating wave approximation we obtain the corrections for a usual weak coupling limit dynamics by means of perturbation theory with Bogolubov-van Hove scaling. It generalizes our previous results on a spin-boson model in the rotating wave approximation. Additionally, in this work we take into account some dependence of the system Hamiltonian on the small parameter. We show that the dynamics is long-time Markovian, i.e. after the bath correlation time all the non-Markovianity could be captured by the renormalization of initial condition and correlation functions. 
			\normalsize

	\section{Introduction}
	
	There is intense discussion in literature about different approaches to definition and characterization of quantum Markovianity (see \cite{Li2018} for review). This is important due to modern both theoretical and applied interest in the non-Markovian phenomena in the open quantum systems (see e.g. \cite{Li2020a, Li2020b, Chruscinski2019} for recent reviews). Most of the known measures of non-Markovianity think about Markovianity \cite{Breuer09, Gullo14, Rivas14, Bae2016, Haikka11} as of some property which is global in time. But a few  works \cite{Petrosky2002, Teretenkov2020nonpert} suggest that it is more natural to speak about some initial time (Zeno time) of order of bath correlation time before which the dynamics is surely highly non-Markovian and only after that it becomes Markovian. In \cite{Teretenkov2020nonpert} we have called such a behaviour long-time Markovian and have shown that it could be naturally captured by perturbation theory with Bogolubov-van Hove scaling. Bogolubov-van Hove scaling does not only insert the small parameter $ \lambda $ before the coupling constant but also rescales the time as $ t \rightarrow \lambda^{-2} t $. It allows one to separate the time-scale on which the Markovian behavior occurs from the time scale of order of the bath correlation time which becomes of order of $ \lambda^{2} $ after the scaling. In \cite{Teretenkov2020nonpert} we have considered the simplest model, namely, the spin-boson in the rotating wave approximation (RWA). Here we generalize the main results of \cite{Teretenkov2020nonpert} to the multi-level model considered in \cite{Teretenkov19} and \cite{Teretenkov20}. 
	
	In Section~\ref{sec:intDiffEq} we recall the results from \cite{Teretenkov19, Teretenkov20} in such a manner which is useful for the further parts of the article. In Section~\ref{sec:expansion} we obtain the first asymptotic correction to the dynamics obtained in the Bogolubov-van Hove limit. Inspired by the unified  Gorini--Kossakowski--Sudarshan--Lindblad (GKSL) quantum master equation approach \cite{Trushechkin2021} we do not only directly generalize the results of \cite{Teretenkov2020nonpert} here, but also take into account terms of order $ \lambda^2 $ in the system Hamiltonian. We show that the corrected dynamics of the reduced density matrix after the correlation time could be described by a semigroup, but the initial condition should be renormalized. This leads to the corrected master equation with a time-independent generator, which is of the GKSL form for sufficiently small $ \lambda $.  In Section~\ref{sec:correlation} we show that if one defines Markovianity in terms of the system correlation functions, then it leads to the semigroup property for the dynamical map describing the reduced density matrix. Thus, strictly speaking, our dynamics is not Markovian in the sense of system correlation functions even asymptotically, but all the non-Markovianty could be absorbed in the  renormalization of the correlation functions. So we establish long-time Markovian properties of the reduced dynamics for this model in the same sense as in \cite{Teretenkov2020nonpert}.
	
	\section{Integro-differential Schroedinger equation}
	\label{sec:intDiffEq}
	
	We consider the model of a multi-level system interacting with several reservoirs from \cite{Teretenkov19}. So let us recall its definition and the main results which we use in this paper. We consider the evolution in the Hilbert space
	\begin{equation*}
		\mathcal{H} \equiv (\mathbb{C}\oplus\mathbb{C}^{N}) \otimes \bigotimes\limits_{i=1}^N \mathfrak{F}_b(\mathcal{L}^2(\mathbb{R})),
	\end{equation*}
	Here $  \mathbb{C}\oplus\mathbb{C}^{N}$ is an $ (N+1) $-dimensional Hilbert space with a pointed one-dimensional subspace which corresponds to the degrees of freedom of the $ (N+1) $-level system. Let $ | i \rangle , i = 0,  1, \ldots,  N $ be an orthonormal basis in such a space and $ | 0 \rangle $ correspond to the pointed subspace.  $ \mathfrak{F}_b(\mathcal{L}^2(\mathbb{R})) $ are bosonic Fock spaces which describe the reservoirs. Let $ | \Omega \rangle $ be a vacuum vector for the reservoirs. Let us also introduce the creation and annihilation operators which satisfy the canonical commutation relations: $ [b_{k,i}, b_{k',j}^{\dagger}] = \delta_{ij} \delta (k - k')$, $[b_{k,i}, b_{k',j}] = 0 $, $ b_{k,i} | \Omega \rangle = 0$.
	
	We consider the system Hamiltonian of the general form with the only requirement that it vanishes on the ground state. Namely, $ \hat{H}_S  =  0 \oplus H_S $, where $ H_S $  is an $ N \times N $ (Hermitian) matrix. The reservoir Hamiltonian is a sum of similar Hamiltonians of the free bosonic fields (with the same dispersion relation $ \omega(k) $)
	\begin{equation*}
		\hat{H}_B =\sum_{i=1}^N \int \omega(k) b_{i}^{\dagger}(k)  b_{i}(k)  d k .
	\end{equation*}
	The interaction is described by the following Hamiltonian
	\begin{equation*}
		\hat{H}_I = \sum_i \int \left(  g_{ k}^*  | 0 \rangle \langle i| \otimes b_{k,i}^{\dagger}+   g_{k}  | i \rangle \langle 0 | \otimes b_{k,i} \right) d k.
	\end{equation*}
	
	Let us denote the unitary evolution of the density matrix
	\begin{equation*}
		\rho(t) = e^{- i \hat{H} t} \rho(0) e^{ i \hat{H} t}
	\end{equation*}
	with the Hamiltonian $ \hat{H} = \hat{H}_S \otimes I + I \otimes \hat{H}_B + \hat{H}_I $ and factorized initial condition
	\begin{equation}\label{eq:initCond}
		\rho(0) =  \rho_S(0) \otimes  | \Omega \rangle \langle \Omega |.
	\end{equation}
	We prefer to  consider the evolution of the density matrix in the interaction representation, so let us define
	\begin{equation*}
		\rho_I(t) \equiv e^{ i (\hat{H}_S \otimes I + I \otimes \hat{H}_B) t} \rho(t)  e^{- i (\hat{H}_S \otimes I + I \otimes \hat{H}_B) t}.
	\end{equation*}
	Moreover, we are interested in the reduced density dynamics, so let us introduce
	\begin{equation*}
		\rho_{SI}(t) \equiv \operatorname{Tr}_{B} \rho_I(t).
	\end{equation*}
	This model and its particular cases are widely used as a test model for many approaches to the open quantum systems \cite{Friedrichs48, Garraway96, Garraway97, Garraway97a, Dalton01, Garraway06, Luchnikov19, Teretenkov19m}. But let us remark  that it omits non-RWA effects which could be important for real physical systems \cite{Tang13, Fleming10, Trubilko2020}.
	
	Let us summarize the results of \cite[Corollary 1]{Teretenkov19} and \cite[Theorem 1]{Teretenkov20} in the interaction representation by the following theorem.
	
	\begin{theorem}\label{th:previousResults}
		Let the integral
		\begin{equation*}
			G(t) = \int |g(k)|^2 e^{-i \omega(k) t} dk
		\end{equation*}
		converge for all $ t \in \mathbb{R}_+ $ and define the continuous function, then for pure initial system state $  \rho_S(0)  $ correspondent to the state vector of the form $ \psi_0(0) \oplus  | \psi (0) \rangle  $ one has
		\begin{equation}\label{eq:solIntPic}
			\rho_{SI}(t) =
			\begin{pmatrix}
				1 - || \psi (t) ||^2 & \psi_0(0) \langle \psi (t)|\\
				\psi_0^*(0) | \psi (t) \rangle  & | \psi (t) \rangle   \langle \psi (t)|
			\end{pmatrix},
		\end{equation}
		where $ | \psi (t) \rangle $ is the solution of the integro-differential equation
		\begin{equation}\label{eq:integroDiffInt}
			\frac{d}{dt} | \psi_I (t) \rangle = - \int_{0}^{t} ds \; G(t-s) e^{i H_S (t-s)} | \psi_I (s) \rangle
		\end{equation}
		with the initial condition $ | \psi (t) \rangle|_{t=0} = | \psi (0) \rangle$.
	\end{theorem}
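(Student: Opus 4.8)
The plan is to use conservation of the total excitation number by the RWA Hamiltonian and to reduce everything to the single-excitation (Friedrichs/Wigner--Weisskopf) sector. First I would introduce the total number operator
\begin{equation*}
	\hat{N} = \sum_{i=1}^{N} |i\rangle\langle i| \otimes I \;+\; I \otimes \sum_{i=1}^{N} \int b_i^\dagger(k)\, b_i(k)\, dk ,
\end{equation*}
and check from the explicit form of the three parts of $\hat{H}$ that $[\hat H,\hat N]=0$: the assumption that $\hat H_S$ vanish on $|0\rangle$ makes $\hat H_S$ block-diagonal for $\hat N$, $\hat H_B$ trivially commutes with $\hat N$, and each term $|0\rangle\langle i|\otimes b_{k,i}^\dagger$ in $\hat H_I$ lowers the system count and raises the field count by one (and conversely for its adjoint). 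Hence $e^{-i\hat H t}$ preserves the eigenspaces of $\hat N$. The initial vector $\psi_0(0)\,|0\rangle\otimes|\Omega\rangle + |\psi(0)\rangle\otimes|\Omega\rangle$ lies in the sum of the zero-excitation sector, spanned by $|0\rangle\otimes|\Omega\rangle$, and the one-excitation sector, spanned by $\{|i\rangle\otimes|\Omega\rangle\}_{i=1}^{N}$ together with $\{|0\rangle\otimes b_{k,i}^\dagger|\Omega\rangle\}$. Since $\hat H_S|0\rangle=0$, $\hat H_B|\Omega\rangle=0$ and every term of $\hat H_I$ annihilates $|0\rangle\otimes|\Omega\rangle$, this vector is a zero eigenvector of $\hat H$, so in the interaction picture the zero-excitation component of the state stays frozen.

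On the one-excitation sector I would write
\begin{equation*}
	|\Psi_1(t)\rangle = \sum_{i=1}^{N} \psi_i(t)\, |i\rangle\otimes|\Omega\rangle \;+\; \sum_{i=1}^{N} \int dk\; f_{k,i}(t)\, |0\rangle\otimes b_{k,i}^\dagger|\Omega\rangle ,
\end{equation*}
substitute into $i\partial_t|\Psi_1\rangle=\hat H|\Psi_1\rangle$, and project onto the basis vectors to get the coupled system $i\dot\psi_j = (H_S\psi)_j + \int dk\, g(k) f_{k,j}$ and $i\dot f_{k,i} = \omega(k) f_{k,i} + g^*(k)\psi_i$ with $f_{k,i}(0)=0$. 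Solving the second equation by the Duhamel formula, $f_{k,i}(t) = -i\,g^*(k)\int_0^t ds\; e^{-i\omega(k)(t-s)}\psi_i(s)$, and inserting it into the first one collapses the field degrees of freedom into the memory kernel $\int dk\,|g(k)|^2 e^{-i\omega(k)(t-s)} = G(t-s)$, which yields $\dot\psi = -iH_S\psi - \int_0^t ds\; G(t-s)\psi(s)$. Passing to the interaction-picture amplitude $|\psi_I(t)\rangle = e^{iH_S t}|\psi(t)\rangle$ (this is the object written $|\psi(t)\rangle$ in the statement, as the two agree at $t=0$) removes the $-iH_S$ term and turns the kernel into $G(t-s)\,e^{iH_S(t-s)}$, which is exactly \eqref{eq:integroDiffInt}; the hypothesis that $G$ converges and is continuous is precisely what is needed for this Volterra equation, and for the Duhamel representation of $f_{k,i}$, to be meaningful.

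For the reduced state: $\rho_S(0)$ pure makes $\rho(0)=|\chi\otimes\Omega\rangle\langle\chi\otimes\Omega|$ pure with $|\chi\rangle=\psi_0(0)|0\rangle+|\psi(0)\rangle$, hence $\rho_I(t)=|\Psi_I(t)\rangle\langle\Psi_I(t)|$ where, by the above, $|\Psi_I(t)\rangle = \psi_0(0)\,|0\rangle\otimes|\Omega\rangle + \sum_i (\psi_I(t))_i\,|i\rangle\otimes|\Omega\rangle + \sum_i\int dk\; e^{i\omega(k)t}f_{k,i}(t)\,|0\rangle\otimes b_{k,i}^\dagger|\Omega\rangle$. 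Taking the partial trace over the bath in the vacuum-plus-one-photon basis: the $\mathbb{C}^N\times\mathbb{C}^N$ block receives only $(\psi_I(t))_i\overline{(\psi_I(t))_j}$, i.e. $|\psi(t)\rangle\langle\psi(t)|$; the zeroth row receives only the vacuum--vacuum term $\psi_0(0)\,\overline{(\psi_I(t))_i}$, i.e. $\psi_0(0)\langle\psi(t)|$ (the cross terms with one-photon states drop out under the trace); and the $(0,0)$ entry is $|\psi_0(0)|^2$ plus the squared norm of the one-photon part, which by norm conservation of $|\Psi_I(t)\rangle$ equals $1-|\psi_0(0)|^2-\||\psi(t)\rangle\|^2$, so the entry is $1-\||\psi(t)\rangle\|^2$ (equivalently, once the lower block is known, this follows from $\Tr\rho_{SI}(t)=1$). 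Assembling the blocks gives \eqref{eq:solIntPic}.

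The index bookkeeping in the projection step is routine; the only genuinely delicate point is the continuum analysis of the $\int dk$ terms — treating $f_{\cdot,i}(t)$ as an honest $L^2(\mathbb{R})$ function, justifying the interchange of $\int dk$ with $d/dt$ and with $\int_0^t ds$, and checking norm conservation directly in the continuum formulation — all of which is exactly what the standing assumption on $G$ controls. I would also note that this statement is just \cite[Corollary 1]{Teretenkov19} and \cite[Theorem 1]{Teretenkov20} rewritten in the interaction picture, so one may alternatively quote those references and simply perform the passage from the Schroedinger to the interaction representation.
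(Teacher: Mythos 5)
Your proposal is correct, and it reconstructs in full the derivation that the paper itself does not spell out but delegates to \cite[Corollary 1]{Teretenkov19} and \cite[Theorem 1]{Teretenkov20}: conservation of the excitation number, reduction to the zero- and one-excitation (Friedrichs) sectors, Duhamel elimination of the field amplitudes to produce the memory kernel $G(t-s)$, passage to the interaction picture, and the partial trace over the vacuum-plus-one-photon basis. This is exactly the standard route taken in the cited references, so no further comment is needed.
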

	
	If $ \psi_0(0) = 0 $, i.e. the initial system state is excited, then Eq. \eqref{eq:integroDiffInt} describes the dynamics of excited state  dynamics instead of the Schroedinger equation, so we call it the integro-differential Schroedinger equation. And similar to the Schroedinger equation  the solution $ | \psi_I (t) \rangle $ could be represented \cite[Sec. 2.3]{Burton05} as $ | \psi_I (t) \rangle = V(t) | \psi_I (0) \rangle  $, where $ V(t) $ is an $ N \times N $ matrix which is a (unique) solution of 
	\begin{equation}\label{eq:integroDiffProp}
		\frac{d}{dt} V(t) = - \int_{0}^{t} ds \; G(t-s) e^{i H_S (t-s)} V(s)
	\end{equation}
	with the initial condition $ V(0) = I $ (but, generally, $ V(t) $ is not unitary). Similar to \cite{Teretenkov19OnePart} (but with a different notation) let us represent $ \rho_{S}(0) $ in the block form:
	\begin{equation}\label{eq:initRhoS}
		\rho_{S}(0) = 
		\begin{pmatrix}
			(\rho_{S}(0))_{00} & (\rho_{S}(0))_{0e} \\
			(\rho_{S}(0))_{e0} & (\rho_{S}(0))_{ee}
		\end{pmatrix},
	\end{equation}
	where $ (\rho_{S}(0))_{00} $ is just a $ 00 $-element of the density matrix $ \rho_{S}(0) $ ($ 1 \times 1 $ block), $ (\rho_{S}(0))_{e0} $ is a vector consisting of  $j0 $-elements $ j = 1, \ldots, N $ of the  density matrix $ \rho_{S}(0) $  ($ N \times 1 $ block), $ (\rho_{S}(0))_{0e} = ((\rho_{S}(0))_{e0} )^+ $  ($ 1 \times N $ block) and $ (\rho_{S}(0))_{ee} $ is the $ N \times N $ matrix formed by $jk $-elements $ j, k = 1, \ldots, N $  of the  density matrix $ \rho_{S}(0) $  ($ N \times N $ block). (The subscript ''$ e $'' stands for ''excited'' as our physical interpretation regards $ |j\rangle, j = 1, \ldots, N$ as excited  states.) If  blocks of the matrix $ \rho_{SI}(t) $ in the representation \eqref{eq:solIntPic} have the same structure, so we use the same notation for them. Namely, taking into account $ | \psi_I (t) \rangle = V(t) | \psi_I (0) \rangle  $ we have  $  (\rho_{S}(0))_{ee} = | \psi (t) \rangle   \langle \psi (t)| = V(t) | \psi (0) \rangle   \langle \psi (0)| V^+(t) =  V(t)(\rho_{S}(0))_{ee} V^+(t) $, $ (\rho_{SI}(t))_{e0} = V(t) (\rho_{S}(0))_{e0}$ and $ (\rho_{S}(0))_{00} = 1 - || \psi (t) ||^2 =  |\psi_0(0)|^2 + || \psi (0) ||^2 - || \psi (t) ||^2 = (\rho_{SI}(0))_{gg} + \mathrm{Tr} \; ((\rho_{SI}(0))_{ee}-V(t)(\rho_{SI}(0))_{ee}  V^{+}(t))$. Hence, one could represent \eqref{eq:solIntPic} in the form
	\begin{equation}\label{eq:solByProp}
		\rho_{SI}(t) = 
		\begin{pmatrix}
			(\rho_{SI}(0))_{gg} + \mathrm{Tr} \; ((\rho_{SI}(0))_{ee}-V(t)(\rho_{SI}(0))_{ee}  V^{+}(t)) &  (\rho_{SI}(0))_{ge} V^{+}(t) \\
			V(t) (\rho_{SI}(0))_{eg} & V(t)(\rho_{SI}(0))_{ee}  V^{+}(t)
		\end{pmatrix}
	\end{equation}
	for pure initial state $ \rho_{SI}(0) = \rho_{S}(0) $. This representation for $ \rho_{SI}(t) $ is linear in $ \rho_{SI}(0) $, so as an arbitrary $ \rho_{S}(0) $ could be represented as a convex combination in pure states, so formula \eqref{eq:solIntPic} is held for an arbitrary initial state $ \rho_{S}(0) $ as well. So we have obtained the following corollary of Th.~\ref{th:previousResults}.
	
	\begin{corollary}
		Let  $ G(t) $ have the same definition and properties as in Th.~\ref{th:previousResults}.  Let the initial reduced density matrix $ \rho_{S}(0) $ from \eqref{eq:initCond} have form \eqref{eq:initRhoS}. Then $ \rho_{SI}(t) $ could be defined by \eqref{eq:solByProp}, where $ V(t) $ is defined by integro-differential equation \eqref{eq:integroDiffProp} with the initial condition $ V(0) = I $.
	\end{corollary}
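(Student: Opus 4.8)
The plan is to reduce everything to Theorem~\ref{th:previousResults} and then exploit linearity. First I would fix a \emph{pure} initial system state $\rho_S(0) = |\psi(0)\rangle\langle\psi(0)|$ corresponding to a state vector $\psi_0(0)\oplus|\psi(0)\rangle$, as in the hypotheses of the theorem. Since $G(t)$ is assumed to converge and be continuous for all $t\in\mathbb{R}_+$, Theorem~\ref{th:previousResults} applies and yields $\rho_{SI}(t)$ via \eqref{eq:solIntPic}, with $|\psi(t)\rangle$ solving \eqref{eq:integroDiffInt}. The key observation is that \eqref{eq:integroDiffInt} is a linear Volterra integro-differential equation with continuous kernel $G(t-s)e^{iH_S(t-s)}$, so by the standard theory cited after the theorem its solutions are reproduced by a matrix propagator, $|\psi_I(t)\rangle = V(t)|\psi_I(0)\rangle$, where $V(t)$ is the unique solution of \eqref{eq:integroDiffProp} with $V(0)=I$. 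Pinning down this factorization — existence, uniqueness and the fact that $V(t)$ captures \emph{all} solutions — is the one genuinely technical point, though it is exactly the content of the reference to \cite[Sec.~2.3]{Burton05}.

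Next I would substitute $|\psi(t)\rangle = V(t)|\psi(0)\rangle$ into each block of \eqref{eq:solIntPic}. The $ee$-block becomes $V(t)|\psi(0)\rangle\langle\psi(0)|V^+(t) = V(t)(\rho_{SI}(0))_{ee}V^+(t)$; the off-diagonal blocks become $V(t)(\rho_{SI}(0))_{eg}$ and $(\rho_{SI}(0))_{ge}V^+(t)$; and for the $gg$-block I would use conservation of the total norm of the state vector, $|\psi_0(0)|^2 + \|\psi(0)\|^2 = 1$, to rewrite $1 - \|\psi(t)\|^2$ as $(\rho_{SI}(0))_{gg} + \Tr\bigl((\rho_{SI}(0))_{ee} - V(t)(\rho_{SI}(0))_{ee}V^+(t)\bigr)$. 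Collecting the four blocks gives precisely \eqref{eq:solByProp} for pure initial states.

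Finally, the right-hand side of \eqref{eq:solByProp} is linear in the entries of $\rho_{SI}(0)$ (the seemingly constant term $(\rho_{SI}(0))_{gg}$ and the trace correction are themselves linear functionals of $\rho_{SI}(0)$). On the other hand, the physical map $\rho_{SI}(0)\mapsto\rho_{SI}(t)$ is linear, being $\operatorname{Tr}_B$ of conjugation by the fixed unitary $e^{i(\hat H_S\otimes I + I\otimes\hat H_B)t}e^{-i\hat Ht}$. Since every density matrix $\rho_S(0)$ of the form \eqref{eq:initRhoS} is a convex combination of pure states, two linear maps that agree on all pure states agree everywhere, so \eqref{eq:solByProp} holds for arbitrary $\rho_S(0)$. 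I expect the only step worth spelling out carefully to be the well-posedness of \eqref{eq:integroDiffProp} together with the factorization $|\psi_I(t)\rangle = V(t)|\psi_I(0)\rangle$; the remainder is bookkeeping with $2\times 2$ block matrices and the linearity/convexity argument.
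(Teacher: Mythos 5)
Your proposal is correct and follows essentially the same route as the paper: invoke Theorem~\ref{th:previousResults} for pure states, represent the solution of \eqref{eq:integroDiffInt} via the propagator $V(t)$ from \cite[Sec.~2.3]{Burton05}, rewrite the blocks (using norm conservation for the $gg$-block) to get \eqref{eq:solByProp}, and extend to mixed states by linearity and convex decomposition. No substantive differences from the paper's argument.
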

	
	Hence, the dynamics of the reduced density matrix is fully defined by $ V(t) $ and we can concentrate on the analysis of $ V(t) $ in the next section to understand dynamical properties of the reduced density matrix.
	
	\section{Expansion with Bogolubov-van Hove scaling}
	\label{sec:expansion}
	
	We are going to capture the asymptotic behavior of \eqref{eq:integroDiffProp}, so we introduce a small parameter $ \lambda $ before the interaction $ \hat{H}_I \rightarrow \lambda \hat{H}_I$, which leads to the squared small parameter in the  bath correlation function $ G(t) \rightarrow \lambda^2 G(t) $. We additionally assume that
	\begin{equation*}
		H_S = H_S^{(0)} + \lambda^2 H_S^{(2)}.
	\end{equation*}
	
	Hence, Eq. \eqref{eq:integroDiffProp} takes the form
	\begin{equation*}
		\frac{d}{dt} V_{\lambda}(t) = - \lambda^2 \int_{0}^{t} ds \; G(t-s) e^{i (H_S^{(0)} + \lambda^2 H_S^{(2)}) (t-s)} V_{\lambda}(s)
	\end{equation*}
	
	We also use the Bogolubov-van Hove scaling $ t \rightarrow \lambda^{-2} t$ as it is used for most mathematically strict derivations of weak coupling master equations \cite{Davies1974, Accardi2002, davies1976quantum, Pechen2002, Pechen2004, Accardi1990}.   Namely, we will expand the function 
	\begin{equation}\label{eq:defOfW}
		W_{\lambda}(t) = V_{\lambda}(\lambda^{-2} t) 
	\end{equation}
	
	The next result is actually very simple, but our experience shows that the small parameter $ \lambda $ appears in it in a bit contraintuitive way, so we give explicit proof of it.
	\begin{lemma}
		The function $ W_{\lambda}(t) $ satisfies
		\begin{equation}\label{eq:Weq}
			\frac{d}{dt} W_{\lambda}(t) = - \int_{0}^{t} ds \; \frac{1}{\lambda^2} G\left(\frac{t-s}{\lambda^2}\right) e^{i (\lambda^{-2} H_S^{(0)} +  H_S^{(2)}) (t-s)} W_{\lambda}(s)
		\end{equation}
		with initial condition $ W_{\lambda}(0) = I $.
	\end{lemma}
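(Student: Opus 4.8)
The plan is a direct change of variables, nothing more. First I would dispose of the initial condition: by the definition \eqref{eq:defOfW}, $W_{\lambda}(0) = V_{\lambda}(\lambda^{-2}\cdot 0) = V_{\lambda}(0) = I$, using the initial condition for \eqref{eq:integroDiffProp}. The hypotheses inherited from Th.~\ref{th:previousResults} (convergence and continuity of $G$) guarantee that $V_{\lambda}$ is $C^1$, so the chain rule and differentiation under the integral sign below are legitimate, and the uniqueness of the solution of \eqref{eq:integroDiffProp} transfers to uniqueness of $W_{\lambda}$.

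Next I would differentiate \eqref{eq:defOfW} by the chain rule, $\frac{d}{dt} W_{\lambda}(t) = \lambda^{-2} V_{\lambda}'(\lambda^{-2} t)$, and substitute the scaled equation for $V_{\lambda}$ evaluated at time $\lambda^{-2} t$. The factor $\lambda^{-2}$ coming from the chain rule cancels the $\lambda^2$ multiplying the memory integral in the equation for $V_{\lambda}$, leaving
\begin{equation*}
	\frac{d}{dt} W_{\lambda}(t) = - \int_{0}^{\lambda^{-2} t} ds \; G(\lambda^{-2} t - s) \, e^{i (H_S^{(0)} + \lambda^2 H_S^{(2)})(\lambda^{-2} t - s)} \, V_{\lambda}(s).
\end{equation*}

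Then I would perform the substitution $s = \lambda^{-2} s'$ in this integral. This rescales the range of integration from $[0, \lambda^{-2} t]$ back to $[0, t]$, produces a Jacobian $ds = \lambda^{-2} ds'$ (a second, independent factor of $\lambda^{-2}$), turns $V_{\lambda}(s) = V_{\lambda}(\lambda^{-2} s')$ into $W_{\lambda}(s')$ by definition, rewrites the kernel argument as $\lambda^{-2} t - s = \lambda^{-2}(t - s') = (t-s')/\lambda^2$, and collapses the exponent $(H_S^{(0)} + \lambda^2 H_S^{(2)}) \lambda^{-2}(t - s') = (\lambda^{-2} H_S^{(0)} + H_S^{(2)})(t - s')$. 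Collecting the two factors of $\lambda^{-2}$ into the coefficient $\frac{1}{\lambda^2}$ in front of $G\!\left(\frac{t-s'}{\lambda^2}\right)$ reproduces \eqref{eq:Weq} verbatim.

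There is essentially no obstacle; the only point requiring care — and the one the author flags as counterintuitive — is the bookkeeping of the two separate sources of $\lambda^{-2}$: one from differentiating the rescaled $W_{\lambda}$, and one from the change of variables inside the convolution integral. Keeping them straight is what makes the small parameter enter \eqref{eq:Weq} in the form $\frac{1}{\lambda^2} G(\cdot/\lambda^2)$ together with the singular frequency $\lambda^{-2} H_S^{(0)}$, rather than in a naively expected way.
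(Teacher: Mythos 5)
Your proposal is correct and follows essentially the same route as the paper's proof: chain-rule differentiation of $W_{\lambda}(t)=V_{\lambda}(\lambda^{-2}t)$, substitution of the scaled integro-differential equation, and the change of variable $s\rightarrow\lambda^{-2}s$ in the memory integral, with the two factors of $\lambda^{-2}$ accounted for exactly as the author does. The extra remarks on the initial condition and regularity are harmless additions; there is no gap.
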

	
	\begin{proof} Let us directly differentiate $  W_{\lambda}(t) $ using Eq. \eqref{eq:integroDiffProp} and change the variable $ s \rightarrow \lambda^{-2} s$
		\begin{align*}
			\frac{d}{dt} W_{\lambda}(t) &= \frac{d}{dt} V_{\lambda}(\lambda^{-2} t) = \lambda^{-2}  V_{\lambda}'(\lambda^{-2} t) =  -  \int_{0}^{\lambda^{-2}t} ds \; G(\lambda^{-2}t-s) e^{i (H_S^{(0)} + \lambda^2 H_S^{(2)}) (\lambda^{-2} t-s)} V_{\lambda}(s) =\\
			&= - \int_{0}^{t} ds \; \frac{1}{\lambda^2} G\left(\frac{t-s}{\lambda^2}\right) e^{i (\lambda^{-2} H_S^{(0)} +  H_S^{(2)}) (t-s)}  V_{\lambda}(\lambda^{-2} s) 
		\end{align*}
		Taking into account \eqref{eq:defOfW} we obtain \eqref{eq:Weq}.
	\end{proof}
	
	Let us define the Laplace transforms of  $ W_{\lambda}(t) $ and $ G(t) $ as
	\begin{equation*}
		\tilde{W}_{\lambda}(p) = \int_0^{+\infty} dt e^{- p t} W_{\lambda}(t), \qquad \tilde{G}(p) = \int_0^{+\infty} dt e^{- p t} G(t).
	\end{equation*}
	
	\begin{lemma}
		The Laplace transform of the function  $ W_{\lambda}(t) $ has the form
		\begin{equation}\label{eq:WLaplaceTransform}
			\tilde{W}_{\lambda}(p) = \frac{1}{ p + \tilde{G}(- i H_S^{(0)} + \lambda^2( p - i  H_S^{(2)}))}
		\end{equation}
		which we understand as the function of self-adjoint matrix $ H_S \equiv H_S^{(0)} + \lambda^2 H_S^{(2)} $, which is defined in the usual way. Namely, let us diagonalize $ H_S = U \mathrm{diag}\;  \{E_{\alpha}\}  U^+ $ by unitary matrix $ U $, then
		\begin{equation*}
			\frac{1}{ p + \tilde{G}(- i H_S + \lambda^2 p)} = U \mathrm{diag}\;  \left\{ \frac{1}{ p + \tilde{G}(- i E_{\alpha} + \lambda^2 p)} \right\}  U^+ .
		\end{equation*}
	\end{lemma}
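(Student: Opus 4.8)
The plan is to read off \eqref{eq:Weq} as a matrix convolution equation and then apply the Laplace transform. Introduce the matrix-valued kernel
\begin{equation*}
	K_{\lambda}(u) = \frac{1}{\lambda^2} G\!\left(\frac{u}{\lambda^2}\right) e^{i (\lambda^{-2} H_S^{(0)} + H_S^{(2)}) u},
\end{equation*}
so that \eqref{eq:Weq} reads $\frac{d}{dt} W_{\lambda}(t) = -\int_0^t K_{\lambda}(t-s) W_{\lambda}(s)\, ds = -(K_{\lambda} * W_{\lambda})(t)$, with $W_{\lambda}(0) = I$. Taking the Laplace transform of both sides, using $\widetilde{(\,\cdot\,)'}(p) = p \tilde W_\lambda(p) - W_\lambda(0)$ and the convolution theorem $\widetilde{K_\lambda * W_\lambda}(p) = \tilde K_\lambda(p)\,\tilde W_\lambda(p)$ (note $K_\lambda$ commutes with $W_\lambda$ since both are built from $H_S^{(0)}$, $H_S^{(2)}$, which a posteriori commute because we will see they enter only through $H_S$), one gets the algebraic relation $p \tilde W_\lambda(p) - I = -\tilde K_\lambda(p) \tilde W_\lambda(p)$, i.e.
\begin{equation*}
	\bigl(p + \tilde K_\lambda(p)\bigr)\tilde W_\lambda(p) = I, \qquad \text{hence} \qquad \tilde W_\lambda(p) = \bigl(p + \tilde K_\lambda(p)\bigr)^{-1}.
\end{equation*}

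The next step is to identify $\tilde K_\lambda(p)$ with $\tilde G$ evaluated at a matrix argument. Compute
\begin{equation*}
	\tilde K_\lambda(p) = \int_0^{+\infty} e^{-pt} \frac{1}{\lambda^2} G\!\left(\frac{t}{\lambda^2}\right) e^{i (\lambda^{-2} H_S^{(0)} + H_S^{(2)}) t}\, dt
\end{equation*}
and substitute $\tau = t/\lambda^2$, $dt = \lambda^2\, d\tau$. The prefactor $\lambda^{-2}$ cancels, the exponent $i(\lambda^{-2} H_S^{(0)} + H_S^{(2)})\lambda^2\tau = i(H_S^{(0)} + \lambda^2 H_S^{(2)})\tau = i H_S \tau$, and $e^{-pt} = e^{-\lambda^2 p\, \tau}$, so
\begin{equation*}
	\tilde K_\lambda(p) = \int_0^{+\infty} G(\tau)\, e^{-(\lambda^2 p - i H_S)\tau}\, d\tau = \tilde G\bigl(\lambda^2 p - i H_S\bigr) = \tilde G\bigl(-i H_S^{(0)} + \lambda^2(p - i H_S^{(2)})\bigr),
\end{equation*}
which is exactly the argument in \eqref{eq:WLaplaceTransform}. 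Since $\lambda^2 p - i H_S = \lambda^2 p - i H_S^{(0)} - i\lambda^2 H_S^{(2)}$ involves only the single self-adjoint matrix $H_S$, writing $H_S = U\,\mathrm{diag}\{E_\alpha\}\,U^+$ and using that $\tilde G$ of a diagonal matrix acts entrywise gives both the functional-calculus interpretation stated in the lemma and the commutativity used above; substituting back into $\tilde W_\lambda(p) = (p + \tilde K_\lambda(p))^{-1}$ yields \eqref{eq:WLaplaceTransform}.

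The routine computation above is essentially all there is; the only points requiring care — and the main (mild) obstacle — are the analytic justifications. First, one must check that $W_\lambda(t)$ grows at most exponentially so that $\tilde W_\lambda(p)$ is defined on some right half-plane; this follows from a Gr\"onwall-type estimate on the integral equation using $|G(t)| \le G(0) = \int |g(k)|^2\, dk < \infty$ (continuity of $G$ together with its definition as an absolutely convergent integral gives boundedness). Second, for the convolution theorem and the change of variables one needs Fubini, which is legitimate once $\operatorname{Re} p$ is large enough that $e^{-pt} W_\lambda(t)$ and $e^{-(\lambda^2 p - i H_S)\tau} G(\tau)$ are absolutely integrable; since $H_S$ is self-adjoint, $\operatorname{Re}(\lambda^2 p - i E_\alpha) = \lambda^2 \operatorname{Re} p$, so $\operatorname{Re} p > 0$ already guarantees convergence of $\tilde G(\lambda^2 p - i H_S)$, and the formula then extends to its full domain of definition by analyticity. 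With these points noted, the identity \eqref{eq:WLaplaceTransform} follows.
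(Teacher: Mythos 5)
Your proof is correct and follows essentially the same route as the paper: Laplace-transform the integro-differential equation, use the convolution theorem, and evaluate the transform of the kernel via the substitution $\tau = t/\lambda^2$ together with the spectral decomposition of $H_S$. The extra analytic care (Gr\"onwall bound, Fubini, half-plane of convergence) and the parenthetical about commutativity (which is not actually needed for the matrix convolution theorem, since the order $\tilde K_\lambda(p)\tilde W_\lambda(p)$ is preserved automatically) go beyond what the paper records but do not change the argument.
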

	
	\begin{proof}
		First of all let us calculate
		\begin{equation*}
			\int_0^{\infty} e^{- p t}G(t) e^{i H_S t} dt = U\operatorname{diag}  \left\{\int_0^{\infty} e^{- p t}G(t) e^{i E_{\alpha} t} dt \right\}  U^{\dagger} = U\operatorname{diag} \left\{  \tilde{G}(p - i E_{\alpha}) \right\}  U^{\dagger} = \tilde{G}(p - i H_S),
		\end{equation*}
		then
		\begin{equation*}
			\int_0^{\infty} e^{- p t} \lambda^{-2}G(\lambda^{-2} t) e^{i \lambda^{-2} H_S t} dt =  \tilde{G}(\lambda^2 p - i H_S).
		\end{equation*}
		Now, let us apply the Laplace transform to both sides of \eqref{eq:Weq}
		\begin{equation*}
			p  \tilde{W}_{\lambda}(p) -  W_{\lambda}(0) = - \tilde{G}(\lambda^2 p - i H_S^{(0)} - i \lambda^2  H_S^{(2)}) \tilde{W}_{\lambda}(p) .
		\end{equation*}
		Taking into account the initial condition $  W_{\lambda}(0) = I $ we obtain \eqref{eq:WLaplaceTransform}.
	\end{proof}
	
	For the next theorem we need the definition of a difference derivative of the function $ f(x) $ (see \cite[Theorem I.3]{Nazaikinskii2011}):
	\begin{equation}\label{eq:diffQuotDef}
		\frac{\delta f}{\delta x}(x,y) \equiv
		\begin{cases}
			\frac{f(x) - f(y)}{x - y}, & x \neq y\\
			f'(x) & x = y.
		\end{cases}
	\end{equation}
	We also need the definition of function $ f(x_1, \ldots, x_n) $ for non-commutative Hermitian matrices $ A_1, \ldots, A_n$ ordered by Feynman indices \cite[p. 26]{Nazaikinskii2011}. Let $ A_k $ have spectral decompositions $ A_k = \sum_{a_k} a_k \Pi_{a_k} $, then
	\begin{equation*}
		f(\overset{i_1}{A_1}, \ldots, \overset{i_n}{A_n}) \equiv \sum_{a_1, \ldots, a_n} f(a_1, \ldots, a_n) \; \mathrm{ord} \;\overset{i_1}{\Pi_{a_1}} \ldots \overset{i_n}{\Pi_{a_n}},
	\end{equation*}
	where $ \mathrm{ord} \;\overset{i_1}{\Pi_{a_1}} \ldots \overset{i_n}{\Pi_{a_n}} $ is product ordered in a such way, that the projectors with smaller indices stand to the left of the ones with larger indices. For example,
	\begin{equation}\label{eq:threeOperOrd}
		f(\overset{2}{A_1}, \overset{1}{A_2}, \overset{3}{A_3}) = \sum_{a_1, a_2, a_3}  f(a_1, a_2, a_3) \Pi_{a_3} \Pi_{a_1}\Pi_{a_2}.
	\end{equation}
	
	We are interested only in the first correction to the standard weak coupling limit as it is of most applied interest \cite{Jang2002, Trushechkin19a}. So in this paper we do not go to the further corrections, but let us remark that it seems to be possible in the way discussed in \cite[Appendix B]{Teretenkov2020nonpert}. 
	
	\begin{theorem}\label{th:assimpt}
		Let $  \tilde{G}(p) $  be twice continuously differentiable with respect to $ p $. For fixed $ t>0 $ at $ \lambda \rightarrow 0 $ one has
		\begin{equation}\label{eq:Wassymp}
			W_{\lambda}(t) = e^{L t} r + O(\lambda^4),
		\end{equation}
		where $ r $ and $ L $ are $ N \times N $ matrices defined as 
		\begin{equation*}
			r = 1  -\lambda^2   \tilde{G}'(- i H_S^{(0)}) 
		\end{equation*}
		and
		\begin{equation*}
			L = - \tilde{G}(- i H_S^{(0)}) + \lambda^2 \left(  \tilde{G}'(- i H_S^{(0)}) \tilde{G}(- i H_S^{(0)}) + i   \overset{2}{H_S^{(2)}}\frac{\delta \tilde{G}}{\delta p} (- i \overset{1}{H_S^{(0)}}, - i \overset{3}{H_S^{(0)}} )\right) .
		\end{equation*}
	\end{theorem}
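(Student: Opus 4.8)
The plan is to work throughout with the exact Laplace transform \eqref{eq:WLaplaceTransform} and to match it, up to terms of order $\lambda^{2}$, with the Laplace transform of the conjectured answer $e^{Lt}r$; inverting the Laplace transform is then the only genuinely analytic step. (The shape of $L$ and $r$ can be anticipated: $L$ should solve the self-consistency relation $L=-\tilde G(-iH_S^{(0)}+\lambda^{2}(L-iH_S^{(2)}))$ for the pole of $\tilde W_{\lambda}$, and $r$ is the corresponding residue; but I will verify \eqref{eq:Wassymp} directly rather than through this heuristic.)

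First I would expand the argument of $\tilde G$ in \eqref{eq:WLaplaceTransform}, written as the fixed base point $-iH_S^{(0)}$ perturbed in the $\lambda$-independent direction $p-iH_S^{(2)}$, i.e.\ $-iH_S^{(0)}+\lambda^{2}(p-iH_S^{(2)})$. Since $\tilde G$ is $C^{2}$, the matrix-valued map $\mu\mapsto\tilde G(-iH_S^{(0)}+\mu(p-iH_S^{(2)}))$ is $C^{2}$ with second derivative bounded on the (compact) set of arguments that occur, so Taylor's formula gives
\[
\tilde G\bigl(-iH_S^{(0)}+\lambda^{2}(p-iH_S^{(2)})\bigr)=\tilde G(-iH_S^{(0)})+\lambda^{2}\,\frac{d}{d\mu}\Big|_{\mu=0}\tilde G\bigl(-iH_S^{(0)}+\mu(p-iH_S^{(2)})\bigr)+O(\lambda^{4}),
\]
with a remainder that is uniform for $p$ in a right half-plane. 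The first-order term is the Daleckii--Krein derivative of a function of a Hermitian matrix (equivalently, the operator calculus of \cite{Nazaikinskii2011}): with $H_S^{(0)}=\sum_{\alpha}E_{\alpha}\Pi_{\alpha}$ it equals $\sum_{\alpha,\beta}\frac{\delta\tilde G}{\delta p}(-iE_{\alpha},-iE_{\beta})\,\Pi_{\alpha}(pI-iH_S^{(2)})\Pi_{\beta}$; the scalar $p$ collapses the double sum into $p\,\tilde G'(-iH_S^{(0)})$, while the $H_S^{(2)}$-part is $-i\,\frac{\delta\tilde G}{\delta p}(-i\overset{1}{H_S^{(0)}},-i\overset{3}{H_S^{(0)}})\overset{2}{H_S^{(2)}}$, which by the Feynman-index convention is the same as $-i\overset{2}{H_S^{(2)}}\frac{\delta\tilde G}{\delta p}(-i\overset{1}{H_S^{(0)}},-i\overset{3}{H_S^{(0)}})$. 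Abbreviating $\tilde G_{0}:=\tilde G(-iH_S^{(0)})$, $\tilde G_{0}':=\tilde G'(-iH_S^{(0)})$ and $D:=\frac{\delta\tilde G}{\delta p}(-i\overset{1}{H_S^{(0)}},-i\overset{3}{H_S^{(0)}})\overset{2}{H_S^{(2)}}$, this reads $\tilde G(\,\cdot\,)=\tilde G_{0}+\lambda^{2}(p\,\tilde G_{0}'-iD)+O(\lambda^{4})$.

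Next I would insert this into \eqref{eq:WLaplaceTransform} and apply the resolvent expansion $(X+\lambda^{2}Y)^{-1}=X^{-1}-\lambda^{2}X^{-1}YX^{-1}+O(\lambda^{4})$ with $X=p+\tilde G_{0}$, $Y=p\,\tilde G_{0}'-iD$, getting
\[
\tilde W_{\lambda}(p)=\frac{1}{p+\tilde G_{0}}-\lambda^{2}\,\frac{1}{p+\tilde G_{0}}\bigl(p\,\tilde G_{0}'-iD\bigr)\frac{1}{p+\tilde G_{0}}+O(\lambda^{4}).
\]
Since $\tilde G_{0},\tilde G_{0}'$ are functions of $H_S^{(0)}$ they commute with $(p+\tilde G_{0})^{-1}$; using $p(p+\tilde G_{0})^{-1}=I-\tilde G_{0}(p+\tilde G_{0})^{-1}$, a short computation turns the $\lambda^{2}$-coefficient into $-\frac{1}{p+\tilde G_{0}}\tilde G_{0}'+\frac{1}{p+\tilde G_{0}}\bigl(\tilde G_{0}'\tilde G_{0}+iD\bigr)\frac{1}{p+\tilde G_{0}}$. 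On the other hand, with $L=-\tilde G_{0}+\lambda^{2}(\tilde G_{0}'\tilde G_{0}+iD)$ and $r=I-\lambda^{2}\tilde G_{0}'$ — precisely the $L$ and $r$ of \eqref{eq:Wassymp} — the Laplace transform of $e^{Lt}r$ is $(p-L)^{-1}r$, and expanding it by the same resolvent expansion gives $\frac{1}{p+\tilde G_{0}}-\lambda^{2}\frac{1}{p+\tilde G_{0}}\tilde G_{0}'+\lambda^{2}\frac{1}{p+\tilde G_{0}}(\tilde G_{0}'\tilde G_{0}+iD)\frac{1}{p+\tilde G_{0}}+O(\lambda^{4})$, which is the same. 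So $\tilde W_{\lambda}(p)$ and the Laplace transform of $e^{Lt}r$ agree up to $O(\lambda^{4})$.

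Finally I would invert the Laplace transform: having $\tilde W_{\lambda}(p)-(p-L)^{-1}r=O(\lambda^{4})$ along a Bromwich contour $\operatorname{Re}p=c>0$, with an integrable bound in $\operatorname{Im}p$, yields $W_{\lambda}(t)-e^{Lt}r=O(\lambda^{4})$ for each fixed $t>0$. I expect this inversion to be the main obstacle: the algebra of the previous steps is routine once the Daleckii--Krein/Feynman-index bookkeeping is in place, but promoting an $O(\lambda^{4})$ bound in the Laplace variable to an $O(\lambda^{4})$ bound in $t$ requires uniform-in-$p$ control of the Taylor and resolvent remainders along the contour together with enough decay in $\operatorname{Im}p$ to carry out the integral (and a check on the location of the spectrum of $\tilde G(\,\cdot\,)$ so that the relevant resolvents exist there for small $\lambda$). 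An alternative that sidesteps the contour integral is to use the a priori bound $\|W_{\lambda}(t)\|\le 1$ (inherited from the sub-unitarity of $V_{\lambda}$, cf.\ \eqref{eq:solByProp}) together with a Gronwall estimate applied directly to the difference between the integro-differential equation \eqref{eq:Weq} and the linear equation $\dot X=LX$, $X(0)=r$, the error in replacing the rescaled kernel $\lambda^{-2}G((t-s)/\lambda^{2})e^{i(\lambda^{-2}H_S^{(0)}+H_S^{(2)})(t-s)}$ being controlled by the $C^{2}$ assumption on $\tilde G$.
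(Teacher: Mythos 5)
Your proposal is correct and follows essentially the same route as the paper: expand the exact Laplace transform \eqref{eq:WLaplaceTransform} to first order in $\lambda^2$ using the noncommutative (Daleckii--Krein / Feynman-index) derivative of $\tilde G$, identify $L$ and $r$ from the resulting expression, and invert the Laplace transform; the only cosmetic difference is that you expand $\tilde W_\lambda(p)$ and $(p-L)^{-1}r$ separately and match them, whereas the paper expands $(\tilde W_\lambda(p))^{-1}$ and factors it as $r^{-1}(p-L)+O(\lambda^4)$. Your closing remark about needing uniform-in-$p$ control to justify the inverse Laplace transform is a fair point that the paper passes over silently, but it does not change the method.
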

	
	\begin{proof}
		By Theorem I.8 from \cite{Nazaikinskii2011} it is possible to expand the denominator of \eqref{eq:WLaplaceTransform} as
		\begin{equation*}
			(\tilde{W}_{\lambda}(p))^{-1}  =  p + \tilde{G}(- i H_S^{(0)}) + \lambda^2 ( p - i   \overset{2}{H_S^{(2)}})\frac{\delta \tilde{G}}{\delta p} (- i \overset{1}{H_S^{(0)}}, - i \overset{3}{H_S^{(0)}} )  + O(\lambda^4)
		\end{equation*}
		As 
		\begin{align*}
			( p - i   \overset{2}{H_S^{(2)}})\frac{\delta \tilde{G}}{\delta p} (- i \overset{1}{H_S^{(0)}}, - i \overset{3}{H_S^{(0)}} )  &= p \frac{\delta \tilde{G}}{\delta p} (- i \overset{1}{H_S^{(0)}}, - i \overset{3}{H_S^{(0)}} ) - i   \overset{2}{H_S^{(2)}}\frac{\delta \tilde{G}}{\delta p} (- i \overset{1}{H_S^{(0)}}, - i \overset{3}{H_S^{(0)}} ) \\
			&=  p \tilde{G}'(- i H_S^{(0)})  - i   \overset{2}{H_S^{(2)}}\frac{\delta \tilde{G}}{\delta p} (- i \overset{1}{H_S^{(0)}}, - i \overset{3}{H_S^{(0)}} )
		\end{align*}
		we have
		\begin{equation*}
			(\tilde{W}_{\lambda}(p))^{-1} = p \left(1  + \lambda^2   \tilde{G}'(- i H_S^{(0)})  \right) + \tilde{G}(- i H_S^{(0)})  -  i \lambda^2    \overset{2}{H_S^{(2)}}\frac{\delta \tilde{G}}{\delta p} (- i \overset{1}{H_S^{(0)}}, - i \overset{3}{H_S^{(0)}} ) + O(\lambda^4).
		\end{equation*}
		Taking into account
		\begin{equation*}
			p  + \left(1  + \lambda^2   \tilde{G}'(- i H_S^{(0)})  \right)^{-1} \left(\tilde{G}(- i H_S^{(0)})  -  i \lambda^2    \overset{2}{H_S^{(2)}}\frac{\delta \tilde{G}}{\delta p}  (- i \overset{1}{H_S^{(0)}}, - i \overset{3}{H_S^{(0)}} )\right) 
			= p + L + O(\lambda^4)
		\end{equation*}
		and
		\begin{equation*}
			\left(1  + \lambda^2   \tilde{G}'(- i H_S^{(0)})  \right)  = r^{-1}  + O(\lambda^4)
		\end{equation*}
		we obtain
		\begin{equation*}
			(\tilde{W}_{\lambda}(p))^{-1} = r^{-1} (p - L) + O(\lambda^4).
		\end{equation*}
		Then
		\begin{equation*}
			\tilde{W}_{\lambda}(p) =  (p - L)^{-1} r + O(\lambda^4)
		\end{equation*}
		After the inverse Laplace transform we obtain \eqref{eq:Wassymp}. 
	\end{proof}
	
	Let us also write $ L $ in a bit more explicit form. Let us expand $  H_S^{(0)} $ in spectral decomposition as 
	\begin{equation*}
		H_S^{(0)} = \sum_E E \, \Pi_E,
	\end{equation*}
	where $ E $ are its eigenvalues and $  \Pi_E $ are its eigenprojectors. Then taking into account \eqref{eq:diffQuotDef} and \eqref{eq:threeOperOrd} we obtain
	
	\begin{align*}
		L = - \sum_{E}\tilde{G}(- i E) \Pi_{E} &+ \lambda^2  \sum_{E}\left( \tilde{G}(- i  E)  +  i \Pi_{E} H_S^{(2)} \Pi_{E}  \right) \tilde{G}'(- i  E)  \Pi_{E}\\
		& - \lambda^2  \sum_{E \neq E'} \frac{\tilde{G}(- i E) - \tilde{G}(- i E')}{E - E'}   \Pi_{E} H_S^{(2)} \Pi_{E'}.
	\end{align*}
	
	Let us denote the reduced density matrix in the interaction picture with rescaled time as $  \rho_{SI, \lambda}(t)  $. By Th.~\ref{th:assimpt} and  formula \eqref{eq:solByProp} for (fixed) $ t >0 $ we obtain
	\begin{equation*}
		\rho_{SI, \lambda}(t) = \rho_{SI, \lambda}(t)|_{\rm as}
		+ O(\lambda^4)
	\end{equation*}
	with
	\begin{equation}\label{eq:solRhoAss}
		\rho_{SI, \lambda}(t)|_{\rm as} =
		\begin{pmatrix}
			1 - \mathrm{Tr} \; e^{L t} r (\rho_{S}(0))_{ee} r^+ e^{L^+ t} &  (\rho_{S}(0))_{ge} r^+ e^{L^+ t}\\
			e^{L t} r (\rho_{S}(0))_{eg} & e^{L t} r (\rho_{S}(0))_{ee} r^+ e^{L^+ t}
		\end{pmatrix}.
	\end{equation}
	
	As in  \cite{Teretenkov2020nonpert} and \cite{Teretenkov2021St} this asymptotic expansion is not uniform in time. Namely, it is not valid at times $ t = O(\lambda^2) $, which leads to the initial layer phenomenon. In particular, if $ \tilde{G}'(- i H_S^{(0)}) \neq 0 $, then $ \rho_{SI, \lambda}(0) \neq \rho_{SI, \lambda}(0)|_{\rm as}  $,
	\begin{equation*}
		\rho_{SI, \lambda}(0)|_{\rm as} =
		\begin{pmatrix}
			1 - \mathrm{Tr} \; r (\rho_{S}(0))_{ee} r^+ &  (\rho_{S}(0))_{ge} r^+ \\
			r (\rho_{S}(0))_{eg} &  r (\rho_{S}(0))_{ee} r^+ 
		\end{pmatrix} = R(\rho_{S}(0)),
	\end{equation*}
	where $ R $ is a linear superoperator defined by
	\begin{equation}\label{eq:renormSuperOp}
		R(\rho) =
		\begin{pmatrix}
			\rho_{gg} + \mathrm{Tr} \; (\rho_{ee}- r \rho_{ee} r^{+}) &  \rho_{ge} r^{+} \\
			r\rho_{eg} & r \rho_{ee} r^{+}
		\end{pmatrix}.
	\end{equation}
	The fact that our expansion is not valid at times $ t = O(\lambda^2) $ is what we mean exactly when saying that it is valid after the bath correlation time.
	
	Let $ L $ be a dissipative matrix, i.e. $ \mathrm{Re} \; v^+ L v \leqslant 0 $, then  \eqref{eq:solRhoAss} could be represented (see \cite{Teretenkov19OnePart}) as
	\begin{equation*}
		\rho_{SI, \lambda}(t)|_{\rm as}= e^{\mathcal{L} t} \rho_{SI, \lambda}(0)|_{\rm as},
	\end{equation*}
	where $ \mathcal{L}  $ has the GKSL form. The explicit form of $ \mathcal{L}  $ could be obtained as follows. Let $ | l\rangle $ be eigenvectors of $ L $ with eigenvalues $ - i \varepsilon_{l} - \Gamma_l/2$, $ \Gamma_l \geqslant 0 $, then
	\begin{equation}\label{eq:GKSLform}
		\mathcal{L}(\rho) =  - i \left[ \sum_l\varepsilon_{l} | l \rangle \langle l |, \rho \right] + \sum_l \Gamma_l \left( |0 \rangle \langle l | \rho |l \rangle \langle 0 | - \frac12 \{|l \rangle \langle 0  |0 \rangle \langle l | , \rho \} \right).
	\end{equation}
	
	Let us show that the   $ \mathrm{Re} \; v^+ L v \leqslant 0 $ is generally held with minor additional conditions. As for $ \lambda \rightarrow + 0 $ we should recover the usual weak coupling GKSL equation, then $ \mathrm{Re} \; v^+ L v \leqslant = - \mathrm{Re} \; v^+ \tilde{G}(- i H_S^{(0)}) v \leqslant 0 $ and, hence, $ \mathrm{Re} \;\tilde{G}(- i E) \geqslant 0 $ for all eigenvalues $ E $ of $ H_S^{(0)} $. (It is possible to obtain it in a more direct way, assuming $ G(t) $ is a correlation function, but this ''corrections based'' way of thinking is more natural for our discussion.) The case $ \mathrm{Re} \;\tilde{G}(- i E) = 0 $ for some $ E $ we would consider as exceptional and assume $ \mathrm{Re} \;\tilde{G}(- i E)  \neq  0 $. Thus, generally $  - \mathrm{Re} \; v^+ \tilde{G}(- i H_S^{(0)}) v > 0 $ for $ v \neq 0$ and, hence, $ \mathrm{Re} \; v^+ L v \leqslant 0 $ is held for sufficiently small $ \lambda $ as well.
	
	So let us summarize the results of this section. The reduced density matrix in the interaction picture with Bogolubov-van Hove scaling could be represented for $ t>0 $
	\begin{equation*}
		\rho_{SI, \lambda}(t) =  e^{\mathcal{L} t} R(\rho_{S}(0))
		+ O(\lambda^4), \qquad \lambda \rightarrow 0
	\end{equation*}
	where $ R $ is a ''renormalization'' superoperator defined by \eqref{eq:renormSuperOp} and $ \mathcal{L} $ has GKSL form \eqref{eq:GKSLform} for sufficiently small $ \lambda $ and is fully defined by $ L $ given in Th. \ref{th:assimpt}. So one should renormalize initial condition and then the dynamics is described by Markovian master equations. Let us also note that it is possible to act in reverse order. As $ r^{-1} L r = L + O(\lambda^4) $, then $  e^{L t} r = r e^{r^{-1} L r t} = r  e^{L t}  + O(\lambda^4) $, which leads to
	\begin{equation*}
		\rho_{SI, \lambda}(t) =  R(e^{\mathcal{L} t} \rho_{S}(0))
		+ O(\lambda^4),
	\end{equation*}
	so it is possible to evolve the initial density matrix in a Markovian way and renormalize the result after it.

	\section{System correlation function}
	\label{sec:correlation}
	
	The GKSL form of generator is sometimes considered \cite{Chruscinski2019} as a quantum definition of quantum Markovianity. Nevertheless, even in the classical case \cite{Feller1959} the Markovian form of the master equation is not enough for Markovianity of a classical stochastic process. The discussion in \cite[Subsection 5.1.3]{Li2018} shows that the condition that the correlation functions satisfy the generalized regression formulae is one of most natural generalizations of conditions which define classical Markov processes. For simplicity following \cite{Teretenkov2020nonpert,Gullo14} we discuss mostly the two-time correlation functions.
	
	First of all let us define $ \Phi_{t_1}^{t_2} $ for $ t_2 \geqslant  t_1$  as a linear superoperator such that
	\begin{equation*}
		\rho_{SI}(t_2) = \Phi_{t_1}^{t_2}(\rho_{SI}(t_1)),
	\end{equation*}
	which leads to
	\begin{gather}
		\Phi_{t_1}^{t_2}(\rho) = \nonumber\\
		\label{eq:divEvolMap}
		\begin{pmatrix}
			\rho_{gg} + \mathrm{Tr} \; (\rho_{ee}-V(t_2) (V(t_1))^{-1} \rho_{ee} (V(t_2) (V(t_1))^{-1})^+ &  \rho_{ge} (V(t_2) (V(t_1))^{-1})^{+} \\
			V(t_2) (V(t_1))^{-1} \rho_{eg} & V(t_2) (V(t_1))^{-1} \rho_{ee} (V(t_2) (V(t_1))^{-1})^+
		\end{pmatrix}.
	\end{gather}
	Here we assume, that $ V(t) $ is invertible as  it is impossible to write even a time-dependent GKSL equation for $ \rho_{SI}(t)  $ otherwise and the dynamics is non-Markovian at the level of the master equation already.

	Similar to  \cite{Teretenkov2020nonpert,Gullo14}  we are interested in some special correlation functions of system dipole operators as they play the most important role in spectroscopy and, hence, in experimental characterization of non-Markovianity. So let us define a dipole operator of the form
	\begin{equation*}
		\sigma_h \equiv
		\begin{pmatrix}
			0 & 0\\
			h & 0
		\end{pmatrix},
	\end{equation*}
	where $ h $ is an $ N $-dimensional vector.
	
	Then let us introduce Markovian correlation functions by quantum regression formula \cite[Subsection 3.4.1]{Li2018}.
	\begin{equation*}
		\langle \sigma_{h_2}^{\dagger}(t_2)  \sigma_{h_1}(t_1) \rangle_M \equiv \mathrm{Tr} \; \sigma_{h_2}^{\dagger} \Phi_{t_1}^{t_2} (\sigma_{h_1} \Phi_{0}^{t_1}(|0 \rangle \langle 0 |)) 
	\end{equation*}
	
	This result is assumed by Markovian approximation. But we will show that such a correlation function could be calculated for this model exactly. Let us also define the exact correlation function by
	\begin{equation*}
		\langle \sigma_{h_2}^{\dagger}(t_2)  \sigma_{h_1}(t_1) \rangle \equiv \mathrm{Tr}\; (\sigma_{h_2}^{\dagger} \otimes I) \mathcal{U}_{t_1}^{t_2} ((\sigma_{h_1} \otimes I) \mathcal{U}_{0}^{t_1}(|0 \rangle \langle 0 | \otimes  | \Omega \rangle \langle \Omega | )),
	\end{equation*}
	where $ \mathcal{U}_{t_1}^{t_2}  $ is unitary dynamics of states in the interaction picture.  
	
	\begin{theorem}\label{th:correlFun}
		For arbitrary $ h_1,. h_2 \in \mathbb{C}^N $ we have
		\begin{align*}
			\langle \sigma_{h_2}^{\dagger}(t_2)  \sigma_{h_1}(t_1) \rangle_M &= h_2^{+} V(t_2) (V(t_1))^{-1} h_1, \\
			\langle \sigma_{h_2}^{\dagger}(t_2)  \sigma_{h_1}(t_1) \rangle &= h_2^{+}  V(t_2-t_1) h_1.
		\end{align*}
	\end{theorem}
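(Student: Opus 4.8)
The plan is to treat the two correlation functions separately. The Markovian one drops out of the explicit formula \eqref{eq:divEvolMap} by bare block-matrix algebra. Since $|0\rangle\langle 0|$ has only its $gg$-block nonzero, \eqref{eq:divEvolMap} gives $\Phi_0^{t_1}(|0\rangle\langle 0|)=|0\rangle\langle 0|$; left multiplication by $\sigma_{h_1}$ then produces the matrix whose only nonzero block is the $eg$-block, equal to $h_1$; feeding this into \eqref{eq:divEvolMap} once more multiplies that block by $V(t_2)(V(t_1))^{-1}$ and leaves the others zero; and $\operatorname{Tr}\,\sigma_{h_2}^{\dagger}(\,\cdot\,)$ extracts precisely $h_2^{+}V(t_2)(V(t_1))^{-1}h_1$.

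For the exact correlation function the starting point is that the full Hamiltonian $\hat H$ annihilates $|0\rangle\otimes|\Omega\rangle$: $\hat H_S$ vanishes on $|0\rangle$, $\hat H_B|\Omega\rangle=0$, and $\hat H_I(|0\rangle\otimes|\Omega\rangle)=0$ because $b_{k,i}|\Omega\rangle=0$ and $\langle i|0\rangle=0$; the free Hamiltonian $H_0\equiv\hat H_S\otimes I+I\otimes\hat H_B$ annihilates it too. So $|0\rangle\otimes|\Omega\rangle$ is a fixed vector of every propagator entering $\mathcal{U}_0^{t_1}$ and $\mathcal{U}_{t_1}^{t_2}$, whence $\mathcal{U}_0^{t_1}(|0\rangle\langle 0|\otimes|\Omega\rangle\langle\Omega|)=|0\rangle\langle 0|\otimes|\Omega\rangle\langle\Omega|$ and $(\sigma_{h_1}\otimes I)\mathcal{U}_0^{t_1}(|0\rangle\langle 0|\otimes|\Omega\rangle\langle\Omega|)=(\sigma_{h_1}|0\rangle\otimes|\Omega\rangle)(\langle 0|\otimes\langle\Omega|)$ is rank one. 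Applying the linear map $\mathcal{U}_{t_1}^{t_2}$ --- conjugation by the interaction-picture propagator $e^{iH_0 t_2}e^{-i\hat H(t_2-t_1)}e^{-iH_0 t_1}$, which again fixes $|0\rangle\otimes|\Omega\rangle$ --- and then tracing against $\sigma_{h_2}^{\dagger}\otimes I$, the whole expression collapses to the scalar
\[
\langle\sigma_{h_2}^{\dagger}(t_2)\sigma_{h_1}(t_1)\rangle=(\langle 0|\otimes\langle\Omega|)(\sigma_{h_2}^{\dagger}\otimes I)\,e^{iH_0 t_2}e^{-i\hat H(t_2-t_1)}e^{-iH_0 t_1}\,(\sigma_{h_1}\otimes I)(|0\rangle\otimes|\Omega\rangle).
\]

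To evaluate this I would push the two free evolutions to the ends. The vector $(\sigma_{h_1}|0\rangle)\otimes|\Omega\rangle$ lies in the ``one excitation in the system, vacuum in the bath'' subspace, on which $e^{-iH_0 t_1}$ acts as $e^{-iH_S t_1}$ on the system factor, and symmetrically $(\langle 0|\otimes\langle\Omega|)(\sigma_{h_2}^{\dagger}\otimes I)e^{iH_0 t_2}$ contributes a factor $e^{iH_S t_2}$ on the left; what is left in the middle is the excited-to-excited, vacuum-to-vacuum block $\langle j|\otimes\langle\Omega|\,e^{-i\hat H(t_2-t_1)}\,|k\rangle\otimes|\Omega\rangle$ of the Schroedinger propagator (only this block contributes, since $\sigma_{h_2}^{\dagger}$ sends the excited subspace onto $|0\rangle$ and $\langle\Omega|$ annihilates the one-photon states produced by $e^{-i\hat H(t_2-t_1)}$). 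By the identity $|\psi_I(t)\rangle=V(t)|\psi_I(0)\rangle$ of Section~\ref{sec:intDiffEq}, combined with the definition of the interaction picture, that block equals $(e^{-iH_S(t_2-t_1)}V(t_2-t_1))_{jk}$. Collecting the factors gives $h_2^{+}\,e^{iH_S t_1}V(t_2-t_1)e^{-iH_S t_1}\,h_1$, and since the Laplace transform of \eqref{eq:integroDiffProp} is $\tilde V(p)=(p+\tilde G(p-iH_S))^{-1}$, a function of $H_S$ alone, $V(t)$ commutes with $e^{iH_S t_1}$, the two exponentials cancel, and one is left with $h_2^{+}V(t_2-t_1)h_1$.

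The step that will require the most care is this last one: keeping the interaction-picture conventions consistent so that the residual $H_S$-phases end up conjugating $V(t_2-t_1)$ symmetrically, and then invoking $[V(t),H_S]=0$ to eliminate them; one must also check explicitly that the one-photon components are killed by the final projection. Everything else is a routine substitution into the formulas collected in Section~\ref{sec:intDiffEq}.
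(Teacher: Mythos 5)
Your proof is correct, and for the first identity it is verbatim the paper's argument: $\Phi_0^{t_1}(|0\rangle\langle 0|)=|0\rangle\langle 0|$, $\sigma_{h_1}|0\rangle\langle 0|=\sigma_{h_1}$, and the $eg$-block of \eqref{eq:divEvolMap} does the rest. For the second identity you follow the same overall strategy (invariance of $|0\rangle\otimes|\Omega\rangle$ under the full and free dynamics, reduction to the single-excitation sector, killing the one-photon component by the final vacuum projection), but you justify the key step differently. The paper simply imports from \cite{Teretenkov19} the explicit action of the interaction-picture propagator on one-excitation states, $U_{t_1}^{t_2}\,(0\oplus h_1)\otimes|\Omega\rangle = 0\oplus V(t_2-t_1)h_1\otimes|\Omega\rangle + |0\rangle\otimes|\chi(t_2-t_1)\rangle$, in which the dependence on $t_2-t_1$ alone and the absence of residual phases are already built in. You instead reconstruct this from the Schr\"odinger-picture propagator and the relation $|\psi_I(t)\rangle=V(t)|\psi_I(0)\rangle$, which forces you to track the free-evolution phases $e^{iH_S t_2}$, $e^{-iH_S t_1}$ explicitly and then cancel them using the (correct, and correctly justified) observation that $\tilde V(p)=(p+\tilde G(p-iH_S))^{-1}$ is a function of $H_S$, so $[V(t),H_S]=0$. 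Your route is more self-contained — it does not lean on the external reference for the sector formula — at the cost of the extra commutation lemma; the paper's route is shorter because the cited formula already encodes both the time-translation invariance and the phase bookkeeping. Both are valid, and your identification of the phase cancellation as the delicate point is accurate.
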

	
	\begin{proof}
		1) By \eqref{eq:divEvolMap}  we have $  \Phi_{0}^{t_1}(|0 \rangle \langle 0 |) = |0 \rangle \langle 0 | $, then taking into account $   \sigma_{h_1}^{\dagger} |0 \rangle \langle 0 | = \sigma_{h_1}^{\dagger}  $ and applying \eqref{eq:divEvolMap} once again we obtain
		\begin{align*}
			\mathrm{Tr} \; \sigma_{h_2}^{\dagger} \Phi_{t_1}^{t_2} (\sigma_{h_1} \Phi_{0}^{t_1}(|0 \rangle \langle 0 |)) &= \mathrm{Tr} \; \sigma_{h_2}^{\dagger} \Phi_{t_1}^{t_2} (\sigma_{h_1}) = 
			\mathrm{Tr} \;
			\begin{pmatrix}
				0 & h_2^+\\
				0 & 0
			\end{pmatrix}
			\begin{pmatrix}
				0 & 0\\
				V(t_2) (V(t_1))^{-1} h_1 & 0
			\end{pmatrix} \\
			&= \mathrm{Tr} \;
			\begin{pmatrix}
				h_2^+ V(t_2) (V(t_1))^{-1} h_1 & 0\\
				0 & 0
			\end{pmatrix} = h_2^+ V(t_2) (V(t_1))^{-1} h_1 
		\end{align*}
		2) The unitary evolution of pure states $ \mathcal{U}_{t_1}^{t_2} $ in zero- and one-particle subspaces could be found in \cite{Teretenkov19}. Namely, we have
		\begin{equation*}
			\mathcal{U}_{0}^{t_1}(|0 \rangle \langle 0 | \otimes  | \Omega \rangle \langle \Omega | )) = |0 \rangle \langle 0 | \otimes  | \Omega \rangle \langle \Omega | .
		\end{equation*}
		Then by direct calculation we obtain
		\begin{equation*}
			\sigma_{h_1} \otimes I |0 \rangle \langle 0 | \otimes  | \Omega \rangle \langle \Omega | =  ( 0 \oplus h_1)  \otimes | \Omega \rangle \langle 0 | \otimes \langle \Omega |
		\end{equation*}
		and
		\begin{equation*}
			\mathcal{U}_{t_1}^{t_2}( ( 0 \oplus h_1)  \otimes | \Omega \rangle \langle 0 | \otimes \langle \Omega | ) = U_{t_1}^{t_2} ( 0 \oplus h_1)  \otimes | \Omega \rangle \langle 0 | \otimes \langle \Omega |  (U_{t_1}^{t_2})^{\dagger}.
		\end{equation*}
		From \cite{Teretenkov19} we have
		\begin{equation*}
			\langle 0 | \otimes \langle \Omega |  (U_{t_1}^{t_2})^{\dagger} =\langle 0 | \otimes \langle \Omega |
		\end{equation*}
		and
		\begin{equation*}
			U_{t_1}^{t_2} ( 0 \oplus h_1)  \otimes | \Omega \rangle = 0 \oplus V(t_2-t_1)h_1  \otimes | \Omega \rangle + |0 \rangle \otimes |\chi(t_2 - t_1)\rangle,
		\end{equation*}
		where $ |\chi(t_2 - t_1)\rangle $ is a one-particle state of the reservoir, explicit form of which is not important for us (we only need $ \langle \Omega |\chi(t_2 - t_1)\rangle  = 0$).
		Hence,
		\begin{align*}
			U_{t_1}^{t_2} ( 0 \oplus h_1)  \otimes | \Omega \rangle \langle 0 | \otimes \langle \Omega |  (U_{t_1}^{t_2})^{\dagger} 
			&=  0 \oplus V(t_2-t_1)h_1 \langle 0 | \otimes | \Omega \rangle  \langle \Omega |  + | 0 \rangle \langle 0 | \otimes | \chi(t_2 - t_1) \rangle  \langle \Omega | \\
			&=  \begin{pmatrix}
				0 & 0\\
				V(t_2-t_1)h_1 & 0
			\end{pmatrix} \otimes  | \Omega \rangle  \langle \Omega |  + | 0 \rangle \langle 0 | \otimes | \chi(t_2 - t_1) \rangle  \langle \Omega |.
		\end{align*}
		and finally we obtain
		\begin{equation*}
			\mathrm{Tr} \;
			\begin{pmatrix}
				0 & h_2^+\\
				0 & 0
			\end{pmatrix} \otimes I \left(
			\begin{pmatrix}
				0 & 0\\
				V(t_2-t_1)h_1 & 0
			\end{pmatrix} \otimes  | \Omega \rangle  \langle \Omega |  + | 0 \rangle \langle 0 | \otimes | \chi(t_2 - t_1) \rangle  \langle \Omega |
			\right) 
			= h_2^{+}  V(t_2-t_1) h_1	.
		\end{equation*}
	\end{proof}
	
	So as in \cite{Gullo14} one regards the condition $ 
	\langle \sigma_{h_2}^{\dagger}(t_2)  \sigma_{h_1}(t_1) \rangle = 	\langle \sigma_{h_2}^{\dagger}(t_2)  \sigma_{h_1}(t_1) \rangle_M $ as a part of the definition of quantum Markovianity, then it is equivalent to $  V(t_2) = V(t_2-t_1) V(t_1)$ for all $ t_2 \geqslant t_1 \geqslant $, i.e. to the semigroup property. Hence, strictly speaking, it is helds only in the zero order of perturbation theory if $ r \neq I $, i.e. if $ \tilde{G}'(- i H_S^{(0)})  \neq 0 $. Explicitly for $ t_2 > t_1 >0  $  we have
	\begin{align*}
		\langle \sigma_{h_2}^{\dagger}(t_2)  \sigma_{h_1}(t_1) \rangle_M &= h_2^{+}  e^{L (t_2 - t_1)} h_1 + O(\lambda^4), \\
		\langle \sigma_{h_2}^{\dagger}(t_2)  \sigma_{h_1}(t_1) \rangle &= h_2^{+}  e^{L (t_2 - t_1)} r h_1 + O(\lambda^4).
	\end{align*}
	So all the non-Markovinity occurs from the initial renormalization and could be absorbed in renormalization of the correlation functions
	\begin{equation*}
		\langle \sigma_{h_2}^{\dagger}(t_2)  \sigma_{h_1}(t_1) \rangle_{r} \equiv 	\langle \sigma_{h_2}^{\dagger}(t_2)  \sigma_{r^{-1}h_1}(t_1) \rangle, \qquad t_2 > t_1 >0,
	\end{equation*}
	which leads to $ \langle \sigma_{h_2}^{\dagger}(t_2)  \sigma_{h_1}(t_1) \rangle_{r} = 	\langle \sigma_{h_2}^{\dagger}(t_2)  \sigma_{h_1}(t_1) \rangle_M  + O(\lambda^4) $. On the one hand, the renormalization $ \langle \sigma_{h_2}^{\dagger}(t_2)  \sigma_{h_1}(t_1) \rangle_{r} $ generalizes the results of \cite[Section 3]{Teretenkov2020nonpert}, on the other hand, it shows that in general it is impossible just to rescale correlation functions by a constant and one should consider the linear combinations of the non-renormailized correlation functions instead as $ r $ is a matrix  rather than constant now.
	
	Let us also consider three-time correlation functions which are widely used in the 2-dimensional echo spectroscopy to measure the population dynamics \cite{Cho2009}:
	\begin{align*}
		& \langle \sigma_{h_2}^{\dagger}(\tau)  \sigma_{h_4}(T+\tau + t)  \sigma_{h_3}^{\dagger}(T+\tau)  \sigma_{h_1} \rangle_M \equiv  \mathrm{Tr} \; \sigma_{h_4} \Phi_{\tau + T}^{\tau + T + t}(\sigma_{h_3}^{\dagger} \Phi_{\tau}^{\tau + T} ( \Phi_{0}^{\tau} ( \sigma_{h_1} |0 \rangle \langle 0 |) \sigma_{h_2}^{\dagger}  ) ), \\
		&\langle \sigma_{h_2}^{\dagger}(\tau)  \sigma_{h_4}(T+\tau + t)  \sigma_{h_3}^{\dagger}(T+\tau)  \sigma_{h_1} \rangle \\  
		& \equiv \mathrm{Tr} \; (\sigma_{h_4} \otimes I) \mathcal{U}_{\tau + T}^{\tau + T + t}((\sigma_{h_3}^{\dagger} \otimes I) \mathcal{U}_{\tau}^{\tau + T} ( \mathcal{U}_{0}^{\tau} ( (\sigma_{h_1} \otimes I) |0 \rangle \langle 0 | \otimes  | \Omega \rangle \langle \Omega |) (\sigma_{h_2}^{\dagger} \otimes I)  ) ),
	\end{align*}
	for $ \tau \geqslant 0 $, $ T \geqslant 0 $ and $ t \geqslant 0 $.
	
	Similar to Th.~\ref{th:correlFun} it is possible show that 
	\begin{align*}
		\langle \sigma_{h_2}^{\dagger}(\tau)  \sigma_{h_4}(T+\tau + t)  \sigma_{h_3}^{\dagger}(T+\tau)  \sigma_{h_1} \rangle_M &=  h_3^+ V(\tau + T) h_1 h_2^+ (V^+(\tau))^{-1} V^+(t + T + \tau) h_4,\\
		\langle \sigma_{h_2}^{\dagger}(\tau)  \sigma_{h_4}(T+\tau + t)  \sigma_{h_3}^{\dagger}(T+\tau)  \sigma_{h_1} \rangle &=  h_3^+ V(\tau + T) h_1 h_2^+ V^+(t + T) h_4.
	\end{align*} 
	
	So they also coincide only in the case, when $ V(t) $ is a semigroup, but after the bath correlation time it could be compensated by renormalization of this correlation function similar to that for two-time correlation functions.
	
	\section{Conclusions}
	
	For our model we have obtained the corrections to usual weak coupling limit reduced dynamics. We have shown that after the bath correlation time the dynamics could be described by the Markovian master equation, but  either the initial condition or the final result should be renormalized. The correlation functions do not satisfy the Markovian formulae, but do satisfy them after the renormalization. We call such a behavior of the reduced dynamics long-time Markovian.   
	
	We think that our results are important for two main reasons. The first one is for general development of corrections to Markovian master equations and regression formulae. The results of \cite{Trushechkin2021Steklov} suggest that time-indpendent master equations could be derived in a much more general situation than this specific model. The second one is that they show the dynamical viewpoint on Markovianity should be developed. We think that it is also supported by recent results \cite{Burgarth2021}. The authors of  \cite{Burgarth2021} interpret them as impossibility to detect non-Markovianity by some initial region of evolution due to the fact that their example exhibits Markovian dynamics up to some fixed time. But we think it is more natural just to say  that the dynamics is Markovian up to this fixed time and becomes non-Markovian after that.

	\section{Acknowledgments}
	The author thanks  A.\,S.~Trushechkin for the fruitful discussion which led to the ideas of some problems considered in the work.

\end{document}